\renewcommand{\algref}[1]{Algorithm~{\rm\ref{alg:#1}}}
\newcommand{\figref}[1]{Fig.~{\rm\ref{fig:#1}}}
\newcommand{\tabref}[1]{Table~{\rm\ref{tab:#1}}}
\newcommand{\propref}[1]{Proposition~{\rm\ref{prop:#1}}}
\newcommand{\secref}[1]{Sect.~\ref{sec:#1}}
\newcommand{\thmref}[1]{Theorem~\ref{thm:#1}}
\renewcommand{\eqref}[1]{\rm{(\ref{eq:#1})}}
\newcommand{\QED}{\quad$\Box$} 
\newtheorem{defn}{Definition}
\newtheorem{prop}{Proposition}
\newtheorem{thm}{Theorem}
\newcommand{\LSSOL}{\textsc{LocalSolver}}
\newcommand{\MINISAT}{\textsc{MiniSat}}
\newcommand{\SUGAR}{\textsc{Sugar}}
\long\def\invis#1{}
\title{An Efficient Local Search for\\ Partial Latin Square Extension Problem\thanks{This work is partially supported by JSPS KAKENHI Grant Number 25870661.}}
\author{Kazuya Haraguchi}
\institute{Faculty of Commerce, Otaru University of Commerce, Japan\\ \email{haraguchi@res.otaru-uc.ac.jp}
}
\begin{document}
\maketitle
\begin{abstract}
  A {\em partial Latin square\/} ({\em PLS\/}) is a partial assignment of $n$ symbols to
  an $n\times n$ grid such that,
  in each row and in each column, each symbol appears at most once.
  The {\em partial Latin square extension\/} problem
  is an NP-hard problem that asks for a largest extension of a given PLS.
  In this paper
  we propose an efficient {\em local search\/} for this problem. 
  We focus on the local search
  such that the neighborhood is defined by {\em $(p,q)$-swap\/}, i.e., 
  removing exactly $p$ symbols and then assigning symbols to at most $q$ empty cells.
  For $p\in\{1,2,3\}$, our neighborhood search algorithm
  finds an improved solution or concludes that no such solution exists in $O(n^{p+1})$ time. 
  We also propose a novel swap operation, Trellis-swap,
  which is a generalization of $(1,q)$-swap and $(2,q)$-swap.
  Our Trellis-neighborhood search algorithm takes $O(n^{3.5})$ time
  to do the same thing. 
  Using these neighborhood search algorithms, 
  we design a prototype iterated local search algorithm
  and show its effectiveness in comparison with 
  state-of-the-art optimization solvers such as IBM ILOG CPLEX and \LSSOL. 
  \keywords{partial Latin square extension problem, 
    maximum independent set problem,
    metaheuristics, local search}
\end{abstract}

\section{Introduction}
\label{sec:intro}

We address the
{\em partial Latin square extension\/} ({\em PLSE\/}) problem. 
Let $n\ge2$ be a natural number. 
Suppose that we are given an $n\times n$ grid of {\em cells\/}. 
A {\em partial Latin square\/} ({\em PLS\/}) is a partial assignment of {\em $n$ symbols\/}
to the grid so that the {\em Latin square condition\/} is satisfied.
The Latin square condition requires that,
in each row and in each column,
every symbol should appear at most once. 
Given a PLS, the PLSE problem asks to fill as many empty cells with symbols as possible 
so that the Latin square condition remains to be satisfied.

In this paper, we propose an efficient {\em local search\/} for the PLSE problem. 
Let us describe our research background and motivation.
The PLSE problem is practically important
since it has various applications such as combinatorial design,
scheduling, optical routers, and 
combinatorial puzzles~\cite{BA.1993,CD.2006,GS.2002}. 
The problem is NP-hard~\cite{C.1984}, and was first studied by Kumar et al.~\cite{KRS.1999}.
The problem has been studied 
in the context of constant-ratio approximation algorithms~\cite{GRS.2004,HJKS.2007,HO.2014,KRS.1999}.
Currently the best approximation factor
is achieved by a local search algorithm~\cite{C.2013,FH.2014,HJKS.2007}. 
In that local search the neighborhood is defined by {\em $(p,q)$-swap\/},
where $p$ and $q$ are non-negative integers such that $p<q$. 
It is the operation of removing exactly $p$ symbols 
from the current solution and then assigning symbols to at most $q$ empty cells. 
To the best of the author's knowledge, 
there is no literature that investigates efficient implementation of local search. 

Our local search is based on Andrade et al.'s local search~\cite{ARW.2012}
for the {\em maximum independent set\/} ({\em MIS\/}) problem. 
The MIS problem is a well-known NP-hard problem as well~\cite{GJ.1979}. 
We utilize Andrade et al.'s methodology
since, as we will see later,
the PLSE problem is a special case of the MIS problem.

We improve the efficiency of the local search
by utilizing the problem structure peculiar to the PLSE problem. 
Specifically, for $p\in\{1,2,3\}$ and $q=n^2$,
our neighborhood search algorithm takes only $O(n^{p+1})$ time
to find an improved solution 
or to conclude that no improved solution exists in the neighborhood,
whereas the direct usage of Andrade et al.'s algorithm ($p=1$ and 2)
and Itoyanagi et al.'s algorithm $(p=3)$~\cite{IHY.2011}
requires $O(n^{p+3})$ time to do the same things even for $q=p+1$.
Note that $q=n^2$ is the upper limit of
the number of nodes that can be inserted in a swap operation.
Our swap operations insert as many nodes to the solution as possible.   

We then propose a new type of swap operation that we call {\em Trellis-swap\/}. 
It is a generalization of $(1,n^2)$-swap and $(2,n^2)$-swap,
and contains certain cases of $(3,n^2)$-swap.
Our Trellis-neighborhood search algorithm
takes $O(n^{3.5})$ time to find an improved solution,
or to conclude that no improved solution exists in the neighborhood.

We regard our local search as efficient
since the time complexities above should be the best possible bounds.  
For example, when $p=1$, we may not be able to improve the bound $O(n^2)$
further
since in fact the bound is linear with respect to the solution size. 

We show how our local search is effective through computational studies. 
The highlight is that our prototype {\em iterated local search\/} ({\em ILS\/}) algorithm
is likely to deliver a better solution
than such state-of-the-art optimization softwares 
as IP and CP solvers from IBM ILOG CPLEX~\cite{CPLEX} and
a general heuristic solver from \LSSOL~\cite{LSSOL}.
Furthermore, among several ILS variants,
the best is one based on Trellis-swap. 

The decision problem version of the PLSE problem is known as 
the {\em quasigroup completion\/} ({\em QC\/}) problem 
in AI, CP and SAT communities~\cite{AVDFMS.2004,GS.1997,GS.2002,SATCOMP}. 
The QC problem has been one of the most frequently used benchmark problems
in these areas
and variant problems are studied intensively,
e.g., Sudoku~\cite{CACM.2008,CCM.2009,LMS.2006,L.2007,SIMONIS,SCGMP.2013},
mutually orthogonal Latin squares~\cite{AMM.2006,FJ.2013,VSKB.2011},
and spatially balanced Latin squares~\cite{GSEE.2004,BPG.2012,SGF.2005}. 
Our local search may be helpful for those who develop exact solvers for the QC problem
since the local search itself or metaheuristic algorithms employing it
would deliver a good initial solution or a tight lower estimate
of the optimal solution size quickly.

The paper is organized as follows.
In \secref{prel}, 
preparing terminologies and notations,
we see that the PLSE problem is a special case of the MIS problem. 
We explain the algorithms and the data structure of our local search in \secref{ls}
and then present experimental results in \secref{exp}.
Finally we conclude the paper in \secref{conc}.

\section{Preliminaries}
\label{sec:prel}

Let us begin with formulating the PLSE problem. 
Suppose an $n\times n$ grid of cells.
We denote $[n]=\{1,2,\dots,n\}$. 
For any $i,j\in[n]$, we denote the cell in the row $i$
and in the column $j$ by $(i,j)$. 
We consider a partial assignment of $n$ symbols to the grid. 
The $n$ symbols to be assigned are $n$ integers in $[n]$. 
We represent a partial assignment by a set of triples, say $T\subseteq[n]^3$, such that
the membership $(v_1,v_2,v_3)\in T$ indicates that the symbol $v_3$ is assigned to $(v_1,v_2)$. 
To avoid a duplicate assignment, we assume that,
for any two triples $v=(v_1,v_2,v_3)$ and $w=(w_1,w_2,w_3)$ in $T$ $(v\ne w)$,
$(v_1,v_2)\ne(w_1,w_2)$ holds. 
Thus $|T|\le n^2$ holds. 

For any two triples $v,w\in[n]^3$, we denote the Hamming distance between $v$ and $w$ by $\delta(v,w)$, 
%
%
i.e., $\delta(v,w)=|\{k\in[3]\mid v_k\ne w_k\}|$.  
We call a partial assignment $T\subseteq[n]^3$
a {\em PLS set\/} if, for any two triples $v,w\in T$ $(v\ne w)$,
$\delta(v,w)$ is at least two. 
One easily sees that $T$ is a PLS set
iff it satisfies the Latin square condition.
%
%
%
We say that two disjoint PLS sets $S$ and $S'$ are {\em compatible\/}
if, for any $v\in S$ and $v'\in S'$,
the distance $\delta(v,v')$ is at least two.
Obviously the union of such $S$ and $S'$ is a PLS set. 
The PLSE problem is then formulated as follows; given a PLS set $L\subseteq[n]^3$,
%
%
we are asked to construct a PLS set $S$ of the maximum cardinality
such that $S$ and $L$ are compatible.

Next, we formulate the MIS problem. 
An {\em undirected graph\/} (or simply a {\em graph\/}) $G=(V,E)$
consists of a set $V$ of {\em nodes\/}
and a set $E$ of unordered pairs of nodes, 
where each element in $E$ is called an {\em edge\/}. 
When two nodes are joined by an edge,
we say that they are {\em adjacent\/}, or equivalently, 
that one node is a {\em neighbor\/} of the other. 
An {\em independent set\/} is a subset $V'\subseteq V$ of nodes such that 
no two nodes in $V'$ are adjacent. 
Given $G$, the MIS problem asks for a largest independent set. 
For any node $v\in V$, we denote the set of its neighbors by $N(v)$.
The number $|N(v)|$ of $v$'-s neighbors is called
the {\em degree of $v$\/}.

Now we are ready to transform any PLSE instance into an MIS instance. 
Suppose that we are given a PLSE instance in terms of a PLS set $L\subseteq[n]^3$.
For any triple $v\in L$, we denote by $N^\ast(v)$
the set of all triples $w$'-s in the entire $[n]^3$
such that $\delta(v,w)=1$, i.e., $N^\ast(v)=\{w\in [n]^3\mid \delta(v,w)=1\}$. 
Clearly we have $|N^\ast(v)|=3(n-1)$. 
The union $\bigcup_{v\in L}N^\ast(v)$ over $L$ is denoted by $N^\ast(L)$. 
\begin{prop}
\label{prop:trans}
A set $S\subseteq[n]^3$ of triples 
is a feasible solution to the PLSE instance $L$
iff $S$, as a node set, is a feasible solution
to the MIS instance $G_L=(V_L,E_L)$ such that
$V_L=[n]^3\setminus(L\cup N^\ast(L))$ and  
$E_L=\{(v,w)\in V_L\times V_L\mid \delta(v,w)=1\}$.
\end{prop}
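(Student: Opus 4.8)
The plan is to prove the equivalence by directly unwinding the two notions of ``feasible solution'' and checking that they match condition by condition. On the PLSE side, the assertion that $S$ is a feasible solution to $L$ means three things at once: $S$ is a PLS set (every two distinct triples of $S$ are at Hamming distance at least two), $S$ is disjoint from $L$, and $S$ and $L$ are compatible (every $v\in S$ and $v'\in L$ satisfy $\delta(v,v')\ge 2$). On the MIS side, the assertion that $S$ is a feasible solution to $G_L$ means $S\subseteq V_L$ and $S$ is an independent set, i.e.\ no two distinct triples of $S$ are at Hamming distance exactly one. I would record these two unfoldings explicitly at the start, so that the remainder is bookkeeping.

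For the ``only if'' direction, assume $S$ is PLSE-feasible. To see $S\subseteq V_L$, fix $v\in S$: we have $v\in[n]^3$ by assumption, $v\notin L$ by disjointness, and $v\notin N^\ast(L)$ because $v\in N^\ast(u)$ for some $u\in L$ would force $\delta(v,u)=1$, contradicting compatibility; hence $v\in[n]^3\setminus(L\cup N^\ast(L))=V_L$. For independence, take distinct $v,w\in S$; since $S$ is a PLS set, $\delta(v,w)\ge 2$, so $(v,w)\notin E_L$.

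For the ``if'' direction, assume $S\subseteq V_L$ is independent in $G_L$. Disjointness of $S$ and $L$ is immediate from $V_L\cap L=\emptyset$. For the PLS-set property, take distinct $v,w\in S$; independence gives $\delta(v,w)\neq 1$, and $v\neq w$ gives $\delta(v,w)\ge 1$, so $\delta(v,w)\ge 2$. For compatibility, take $v\in S$ and $v'\in L$; since $v\in V_L$ we have $v\notin N^\ast(v')\subseteq N^\ast(L)$, so $\delta(v,v')\neq 1$, and $v\neq v'$ because $v\notin L$, whence again $\delta(v,v')\ge 2$.

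There is essentially no obstacle in this argument; the only points requiring care are (i) keeping track of all three PLSE requirements and pairing each with the right part of the MIS construction ($v\notin L$ for disjointness, $v\notin N^\ast(L)$ for compatibility, and absence of edges within $S$ for the PLS-set property), and (ii) the small but repeatedly used observation that for distinct triples $\delta\ge 1$ holds automatically, so ruling out $\delta=1$ is the same as forcing $\delta\ge 2$ --- this is precisely why ``no edge'' simultaneously captures the PLS-set condition and compatibility. One may also remark that the standing hypothesis that $L$ is itself a PLS set, while not actually used in the equivalence, is what makes it meaningful to speak of a PLSE instance $L$ in the first place.
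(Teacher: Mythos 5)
Your proof is correct and is exactly the direct condition-by-condition verification one would expect; the paper itself omits the proof of \propref{trans} ``due to space limitation,'' and your argument supplies precisely the intended routine unwinding (disjointness $\leftrightarrow v\notin L$, compatibility $\leftrightarrow v\notin N^\ast(L)$, PLS-set property $\leftrightarrow$ independence, using that distinct triples automatically have $\delta\ge1$). Nothing further is needed.
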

We omit the proof due to space limitation. 
By \propref{trans}, we hereafter consider solving the PLSE instance
by means of solving the transformed MIS instance $G_L=(V_L,E_L)$. 
Omitting the suffix $L$,
we write $G=(V,E)$ to represent $G_L=(V_L,E_L)$ for simplicity. 

Let us observe the structure of $G$. 
We regard each node $v=(v_1,v_2,v_3)\in V$ as a grid point in the 3D integral space. 
Any grid point is an intersection of three grid lines
that are orthogonal to each other. 
In other words, each node is on exactly three grid lines. 
A grid line is {\em in the direction $d$\/}
if it is parallel to the axis $d$ and
perpendicular to the 2D plane that is generated by
the other two axes. 
We denote by $\ell_{v,d}$ the grid line
in the direction $d$ that passes $v$.
Two nodes are joined by an edge iff
there is a grid line that passes both of them. 
The nodes on the same grid line form a {\em clique\/}. 
This means that any independent set should contain
at most one node among those on a grid line. 
Since $|N(v)|\le|N^\ast(v)|=3(n-1)$ and $|V|=O(n^3)$, we have $|E|=O(n^4)$.

We introduce notations and terminologies on local search for the MIS problem. 
We call any independent set in $G$ simply a {\em solution\/}. 
Given a solution $S\subseteq V$,
we call any node $x\in S$ a {\em solution node\/}
and any node $v\notin S$ a {\em non-solution node\/}. 
For a non-solution node $v$,
we call any solution node in $N(v)$ a {\em solution neighbor of $v$\/}. 
We denote the set of solution neighbors by $N_S(v)$, i.e., $N_S(v)=N(v)\cap S$. 
Since $v$ has at most one solution neighbor on one grid line
and three grid lines pass $v$, we have $|N_S(v)|\le3$. 
We call the number $|N_S(v)|$ the {\em tightness of $v$\/}
and denote it by $\tau_S(v)$. 
When $\tau_S(v)=t$, we call $v$ {\em $t$-tight\/}. 
In particular, a 0-tight node is called {\em free\/}. 
When $x$ is a solution neighbor of a $t$-tight node $v$,
we may say that $v$ is a {\em $t$-tight neighbor of $x$\/}.

For two integers $p,q$ such that $0\le p<q$, 
the {\em $(p,q)$-swap\/} refers to
an operation of removing exactly $p$ solution nodes from $S$
and inserting at most $q$ free nodes into $S$
so that $S$ continues to be a solution. 
The {\em $(p,q)$-neighborhood of $S$\/}
is a set of all solutions that are obtained 
by performing a $(p,q)$-swap on $S$. 
%
%
We assume $q\le n^2$
since, for any $q>n^2$,
the $(p,q)$-neighborhood and the $(p,n^2)$-neighborhood are equivalent. 
A solution $S$ is called {\em $(p,q)$-maximal\/} if the $(p,q)$-neighborhood 
contains no improved solution $S'$ such that $|S'|>|S|$. 
We call a solution {\em $p$-maximal\/} if it is $(p,n^2)$-maximal. 
In particular, we call a 0-maximal solution simply a {\em maximal\/} solution. 
Being $p$-maximal implies that $S$ is also $p'$-maximal for any $p'<p$.
Equivalently, if $S$ is not $p'$-maximal, then it is not $p$-maximal either for any $p>p'$. 
%

\section{Local Search}
\label{sec:ls}
In this section, we present the main component algorithm of the local search.
The main component is a {\em neighborhood search algorithm\/}. 
Given a solution $S$ and a neighborhood type being specified, 
it computes an improved solution in the neighborhood
or concludes that no such solution exists. 
Once a neighborhood search algorithm is established, 
it is immediate to design a local search algorithm
that computes a maximal solution in the sense of the specified neighborhood type;
starting with an appropriate initial solution,
we repeat moving to an improved solution
as long as the neighborhood search algorithm delivers one.

Specifically we present $(p,n^2)$-neighborhood search algorithms $(p\in\{1,2,3\})$
and a Trellis-neighborhood search algorithm. 
The basic data structure is borrowed from \cite{ARW.2012},
but we improve the efficiency by using the problem structure
peculiar to the PLSE problem.
The $(p,n^2)$-neighborhood search algorithms run in $O(n^{p+1})$ time,
whereas Trellis-neighborhood search algorithm runs in $O(n^{3.5})$ time. 
We describe the data structure 
that is commonly used in all these algorithms 
in \secref{ls-data}
and present the neighborhood search algorithms in \secref{ls-swap}.
Finally in \secref{ls-time}, from the viewpoint of approximation algorithms, 
we mention approximation factors 
of $p$-maximal and Trellis-maximal solutions
and analyze the time complexities that are needed to compute them.

We claim that our work should be far from trivial. 
Without using the MIS formulation,
one may conceive a $(1,2)$-neighborhood
search algorithm that runs in $O(n^3)$ time,
but its improvement is not easy. 
Concerning previous local search algorithms for the MIS problem,
their direct usage would require more computation time.
Andrade et al.'s~\cite{ARW.2012} $(1,2)$-neighborhood search
algorithm (resp., $(2,3)$-neighborhood search algorithm)
requires $O(|E|)=O(n^4)$ time (resp., $O(\Delta|E|)=O(n^5)$ time,
where $\Delta$ denotes the maximum degree in the graph). 
Itoyanagi et al.~\cite{IHY.2011} extended Andrade et al.'s work 
to the {\em maximum weighted independent set problem\/}.
Their $(3,4)$-neighborhood search algorithm runs in $O(\Delta^2|E|)=O(n^6)$ time.

\subsection{Data Structure}
\label{sec:ls-data}

We mostly utilize the data structure of Andrade et al.'s~\cite{ARW.2012}.
We represent a solution by a permutation of nodes. 
In the permutation, every solution node is ordered ahead of all the non-solution nodes,
and among the non-solution nodes,
every free node is ordered ahead of all the non-free nodes. 
In each of the three sections (i.e., solution nodes, free nodes and non-free nodes),
the nodes can be ordered arbitrarily. 
We also maintain the solution size and the number of free nodes. 
For each non-solution node $v\notin S$,
we store its tightness $\tau_S(v)$
and pointers to its solution neighbors. 
Since $\tau_S(v)\le3$,
we store at most three pointers for $v$. 

Let us mention the novel settings
that we introduce additionally to enhance the efficiency. 
Regarding each node as a grid point in the 3D integral space,
we store the node set $V$ by means of a 3D $n\times n\times n$ array,
denoted by $C$.
For each triple $(v_1,v_2,v_3)\in[n]^3$,
if $(v_1,v_2,v_3)\in V$, then we let $C[v_1][v_2][v_3]$ have
the pointer to the node $(v_1,v_2,v_3)$,
and otherwise, we let it have a null pointer. 
For each solution node $x\in S$,
we store the number of its 1-tight neighbors 
along each of the three grid lines passing $x$. 
We denote this number by $\mu_d(x)$ $(d=1,2,3)$,
where $d$ denotes the direction of the grid line. 
We emphasize that maintaining $\mu_d(x)$ should play a key role
in improving the efficiency of the local search. 
Clearly the size of the data structure is $O(n^3)$. 
We can construct it in $O(n^3)$ time, as preprocessing of local search.

Using the data structure, we can execute some significant operations efficiently.
For example, we can identify whether $S$ is maximal or not in $O(1)$ time;
it suffices to see whether the number of free nodes is zero or not. 
The neighbor set $N(v)$ of a node $v=(v_1,v_2,v_3)$ 
can be listed in $O(n)$ time by searching $C[v'_1][v_2][v_3]$'-s,
$C[v_1][v'_2][v_3]$'-s and $C[v_1][v_2][v'_3]$'-s
for every $v'_1,v'_2,v'_3\in[n]$ such that 
$v'_1\ne v_1$, $v'_2\ne v_2$ and $v'_3\ne v_3$. 
Furthermore, we can remove a solution node from $S$ or 
insert a free node into $S$ in $O(n)$ time. 
We describe how to implement the removal operation. 
Let us consider removing $x\in S$ from $S$. 
For the permutation representation,
we exchange the orders between $x$ and the last node in the solution node section. 
We decrease the number of solution nodes by one
and increase the number of free nodes by one;
as a result, $x$ falls into the free node section. 
Its tightness is set to zero since it has no solution neighbor. 
For each neighbor $v\in N(x)$, 
we release its pointer to $x$ since $x$ is no longer a solution node,
and decrease the tightness $\tau_S(v)$ by one. 
\begin{itemize}
\item If $\tau_S(v)$ is decreased to zero,
  then $v$ is now free. 
  To put $v$ in the free node section, 
  we exchange the permutation orders 
  between $v$ and the head node in the non-free node section,
  and increase the number of free nodes by one. 
\item If $\tau_S(v)$ is decreased to one, then $v$ is now 1-tight
  and has a unique solution neighbor, say $y$.
  We increase the number $\mu_d(y)$ by one,
  where $d$ is the direction of the grid line that passes both $v$ and $y$. 
\end{itemize}
The total time complexity is $O(n)$. 
The insertion operation can be implemented in an analogous way.

\subsection{Neighborhood Search Algorithms}
\label{sec:ls-swap}

Let us describe the key idea on how to realize efficient neighborhood search algorithms. 
Suppose that a solution $S$ is given. 
Removing a subset $R\subseteq S$ from $S$ makes $R$ and certain neighbors free,
that is, non-solution nodes such that all solution neighbors are contained in $R$. 
If there is an independent set 
among these free nodes whose size is larger than $|R|$,
then there is an improved solution. 
Of course a larger independent set is preferred.
A largest one can be computed efficiently in some cases
although the MIS problem is computationally hard in general.

The first case is when $p=|R|$ is a small constant. 
Below we explain $(p,n^2)$-neighborhood search algorithms for $p=1$, 2 and 3.
Running in $O(n^{p+1})$ time, the algorithms have the similar structures.
Each algorithm searches all $R$'-s that can lead to an improved solution
by means of searching ``trigger'' nodes, at it were, 
sweeping the permutation representation.
%
The time complexity is linearly bounded by the number of trigger nodes; 
For each trigger node, the algorithm collects
certain solution nodes around it, which are used as $R$.
This takes $O(1)$ time.
On the other hand,
the algorithm does not search for the non-solution nodes to be inserted
unless it finds the MIS size among the free nodes from $S\setminus R$ larger than $|R|$.  
Surprisingly we can decide the MIS size in $O(1)$ time.
Only when the size is larger than $|R|$,
the algorithm searches for the MIS to be inserted,
and thereby it obtains an improved solution and terminates. 
The search for the MIS
requires $O(n)$ time, but it does not affect the total time complexity. 

Another case such that the MIS problem is solved efficiently
is when all solution nodes in $R$ are contained in such a 2D facet
that is induced by fixing the value of one dimension in the 3D space. 
In this case, the MIS problem is solved by means of bipartite maximum matching. 
This motivates us to invent a novel swap operation, Trellis-swap.  
In this swap, the size $|R|$ 
is not a constant but can change from 1 to $n$. 
Trellis-neighborhood search algorithm runs in $O(n^{3.5})$ time. 

\subsubsection{$(1,n^2)$-Swap.}
Let $S$ be a maximal solution.
%
%
%
Its $(1,n^2)$-neighborhood contains an improved solution iff 
there are a solution node $x\in S$ and non-solution nodes
$u,v\notin S$ such that
$(S\setminus\{x\})\cup\{u,v\}$ is a solution. 
It is clear that $u$ and $v$ should be neighbors of $x$.
They are $1$-tight, and their unique solution neighbor is $x$. 
The $u$ and $v$ should not be adjacent, 
which implies that $u$ and $v$ are not on the same grid line. 
Then, the number of nodes that can be inserted into $S\setminus\{x\}$
is given by 
$\nu(x)=|\{d\in\{1,2,3\}\mid \mu_d(x)>0\}|$,
i.e., the number of grid lines passing $x$
on which a 1-tight node exists.

\begin{thm}
  \label{thm:1}
  Given a solution $S$,
  we can find an improved solution in its $(1,n^2)$-neighborhood
  or conclude that it is 1-maximal in $O(n^2)$ time. 
\end{thm}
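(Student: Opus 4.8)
The plan is to reduce the entire search to an $O(1)$-time test applied once at each solution node, relying on the characterization of improving $(1,n^2)$-swaps sketched just above.

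First I would dispose of the easy case. Whether $S$ is maximal can be decided in $O(1)$ time, since the number of free nodes is maintained. If $S$ is not maximal and $S\neq\emptyset$, a single $(1,n^2)$-swap already improves it: remove any $x\in S$; a free node $f$ stays free after the removal (it has no solution neighbor, so $x\notin N(f)$) and is not adjacent to $x$, hence $(S\setminus\{x\})\cup\{f,x\}$ is a solution of size $|S|+1$, constructed in $O(n)$ time by the primitives of \secref{ls-data}. (When $S=\emptyset$ its $(1,n^2)$-neighborhood is empty and $S$ is trivially $1$-maximal.) So assume henceforth that $S$ is maximal.

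Next I would establish the criterion: for maximal $S$, the $(1,n^2)$-neighborhood contains an improved solution if and only if $\nu(x)\ge2$ for some $x\in S$. For the ``if'' part, take such an $x$, choose two directions $d_1\neq d_2$ with $\mu_{d_1}(x)>0$ and $\mu_{d_2}(x)>0$, and let $u,v$ be $1$-tight nodes on $\ell_{x,d_1}$ and $\ell_{x,d_2}$ respectively. Both have $x$ as their only solution neighbor, so both become free once $x$ is removed, and neither is adjacent to any node of $S\setminus\{x\}$; a one-line computation also gives $\delta(u,v)=2$ (they agree precisely in the coordinate indexed by neither $d_1$ nor $d_2$), so $u$ and $v$ are non-adjacent. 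Hence $(S\setminus\{x\})\cup\{u,v\}$ is a solution larger than $S$ --- and more generally one may reinsert one $1$-tight node from each of the $\nu(x)$ nonempty grid lines through $x$. For the ``only if'' part, removing a single node $x$ frees exactly $x$ together with the $1$-tight nodes whose unique solution neighbor is $x$, and all of the latter lie on the three grid lines through $x$; since the non-solution nodes on a grid line form a clique, an independent set among the freed nodes contains at most one per line, i.e.\ at most $\nu(x)$ of them, and cannot also contain $x$ (which is adjacent to all of them). Thus the best $(1,n^2)$-swap removing $x$ reinserts $\max(1,\nu(x))$ nodes, exceeding $|S|$ exactly when $\nu(x)\ge2$.

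The algorithm is then immediate: sweep the solution-node section of the permutation, reading $\mu_1(x),\mu_2(x),\mu_3(x)$ and evaluating $\nu(x)$ in $O(1)$ for each $x\in S$; if no $x$ has $\nu(x)\ge2$, report that $S$ is $1$-maximal, otherwise take the first such $x$, recover the required $1$-tight nodes by scanning, via the array $C$, the grid lines of two directions $d$ with $\mu_d(x)>0$ in $O(n)$ time, and output the improved solution using the $O(n)$-time removal/insertion primitives. Since $|S|\le n^2$, the sweep is $O(n^2)$ and the one-off construction is $O(n)$, for $O(n^2)$ overall, the data structure of \secref{ls-data} being maintained incrementally across iterations. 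The only step with real content is the criterion above, in particular its ``only if'' half: it hinges on the structural fact that deleting $x$ frees only nodes on the three grid lines through $x$, so the maximum independent set among the freed nodes is read straight off the counters $\mu_d(x)$ with no search for the inserted nodes needed until an improvement is already certified; the remaining risk is merely bookkeeping, namely that the updates of \secref{ls-data} keep every $\mu_d(x)$ correct.
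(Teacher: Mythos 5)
Your proof is correct and follows essentially the same route as the paper's: dispose of the non-maximal case in $O(1)$ plus $O(n)$, then sweep the solution nodes testing $\nu(x)\ge2$ in $O(1)$ each via the counters $\mu_d(x)$, and only then spend $O(n)$ recovering the nodes to insert. The only differences are cosmetic refinements --- you prove the ``improvable iff $\nu(x)\ge2$'' criterion in full (the paper establishes it in the discussion preceding the theorem) and you phrase the non-maximal case as a genuine $(1,n^2)$-swap by removing and re-inserting $x$ along with a free node.
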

\begin{proof}
We assume that $S$ is maximal;
we can check in $O(1)$ time whether $S$ is maximal or not. 
If it is not maximal,
we have an improved solution by inserting any free node into $S$.
Finding a free node and inserting it into $S$
take $O(n)$ time, and then we have done. 

An improved solution exists iff there is $x\in S$ such that $\nu(x)\ge2$. 
All solution nodes can be searched by sweeping the first section of the permutation representation.
There are at most $n^2$ solution nodes. 
For each solution node $x$, the number $\nu(x)$ can be computed in $O(1)$ time.  
If $x$ with $\nu(x)\ge2$ is found,
we can determine the 1-tight nodes to be inserted in $O(n)$ time
by searching each grid line with $\mu_d(x)>0$.
Removing $x$ from $S$ and
inserting the nodes into $S\setminus\{x\}$ take $O(n)$ time. 
\QED
\end{proof}

\subsubsection{$(2,n^2)$-Swap.}
Let $S$ be a 1-maximal solution. 
Its $(2,n^2)$-neighborhood contains an improved solution
iff there exist $x,y\in S$ and $u,v,w\notin S$ 
such that $(S\setminus\{x,y\})\cup\{u,v,w\}$ is a solution. 
These nodes should satisfy Lemmas~1 to 4 in \cite{ARW.2012},
which are conditions established for the general MIS problem. 
According to the conditions, at least one node in $\{u,v,w\}$ is 2-tight,
and $x$ and $y$ are the solution neighbors of the 2-tight node. 
Let $u$ be this 2-tight node without loss of generality. 

See \figref{twoswap}. 
Suppose inserting $u$ into $S\setminus\{x,y\}$.
The nodes that can be inserted additionally should be on the four solid grid lines,
say $\ell_{x,2}$, $\ell_{x,3}$, $\ell_{y,1}$ and $\ell_{y,3}$.  
Let $F$ be the set of nodes on these grid lines
that are free from $(S\cup\{u\})\setminus\{x,y\}$. 
Let $v$ be any node in $F$. 
For $S$, $v$ should not have any solution neighbors other than $x$ or $y$
since otherwise it would not be free from $(S\cup\{u\})\setminus\{x,y\}$. 
Then we have $N_S(v)\subseteq\{x,y\}$ and thus $v$ is either 1-tight or 2-tight.  
As can be seen, $u_{xy}$,
the intersection point of $\ell_{x,2}$ and $\ell_{y,1}$,
is the only possible 2-tight node. 
All the other nodes in $F$ are 1-tight.
Note that, even though the node $u_{xy}$ exists,
it does not necessarily belong to $F$; it can be 3-tight. 

\begin{figure}[t]
  \centering
  \begin{tabular}{c}
    \includegraphics[width=5.0cm,keepaspectratio]{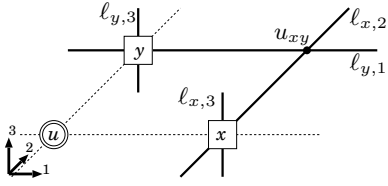}
  \end{tabular}
  \caption{An illustration of the case such that $(2,n^2)$-swap can occur}
  \label{fig:twoswap}
\end{figure}

Suppose that the node $u_{xy}$ exists and it is 2-tight.
In the subgraph induced by $F$, 
one can readily see that 
{\em every\/} MIS contains $u_{xy}$
only when no 1-tight node exists on $\ell_{x,2}$ or $\ell_{y,1}$
(i.e., $\mu_2(x)=\mu_1(y)=0$).
Thus, when we count the MIS size in the subgraph,
we need to take $u_{xy}$ into account only in this case. 

\begin{thm}
  \label{thm:2}
  Given a solution $S$,
  we can find an improved solution in its $(2,n^2)$-neighborhood
  or conclude that it is 2-maximal in $O(n^3)$ time. 
\end{thm}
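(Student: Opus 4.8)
The plan is to follow the template of the proof of \thmref{1}: first reduce to a clean base case, then sweep a small family of ``trigger'' nodes spending only $O(1)$ time per trigger, and incur $O(n)$ work only once an actual improving swap has been located.

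I would first argue that we may assume $S$ is $1$-maximal: by \thmref{1}, in $O(n^2)$ time we either certify this or return an improved solution, and an improving $(1,n^2)$-swap, which necessarily inserts at least two nodes, also realizes an improving $(2,n^2)$-swap (remove and immediately re-insert an arbitrary solution node). So suppose $S$ is $1$-maximal. By the conditions of Lemmas~1--4 in \cite{ARW.2012}, recalled above, every improving $(2,n^2)$-swap removes the two solution neighbours $x,y$ of some inserted $2$-tight non-solution node $u$ and inserts, besides $u$, an independent set of size at least two taken from the set $F$ of nodes lying on the four grid lines $\ell_{x,2},\ell_{x,3},\ell_{y,1},\ell_{y,3}$ that are free from $(S\cup\{u\})\setminus\{x,y\}$ (see \figref{twoswap}, with the coordinates relabelled so that $x$ and $y$ differ from $u$ in coordinates $1$ and $2$). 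Conversely, whenever the subgraph induced by $F$ contains an independent set of size two, performing the associated swap strictly improves $S$. Hence $S$ is $2$-maximal iff for every $2$-tight node $u$, with solution neighbours $x,y$, the subgraph induced by the corresponding $F$ has independence number at most one.

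The algorithm then sweeps the non-solution nodes in the permutation representation --- there are $O(n^3)$ of them --- and for each $2$-tight node $u$ retrieves $x$ and $y$ in $O(1)$ time from the stored solution-neighbour pointers. As established before \thmref{2}, the subgraph induced by $F$ is a disjoint union of four cliques (one per grid line), except that the cliques on $\ell_{x,2}$ and $\ell_{y,1}$ may share the single vertex $u_{xy}=\ell_{x,2}\cap\ell_{y,1}$; so a maximum independent set takes at most one vertex per line, and its size is determined by the four Boolean values $[\mu_2(x)>0],[\mu_3(x)>0],[\mu_1(y)>0],[\mu_3(y)>0]$ --- each indicating whether the corresponding line carries a $1$-tight vertex of $F$ --- together with the status of $u_{xy}$, obtained in $O(1)$ time by looking its coordinates up in the array $C$ and checking whether the node stored there is $2$-tight, and counted (as noted in the text) only when $\mu_2(x)=\mu_1(y)=0$. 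Thus whether the independence number of $F$ is at least two is decided in $O(1)$ time per trigger. If a trigger is found, the at most four nodes to insert are listed in $O(n)$ time by scanning the relevant grid lines, and the swap (remove $x,y$; insert $u$ and those nodes) is carried out in $O(n)$ time with the primitives of \secref{ls-data}; otherwise the algorithm reports that $S$ is $2$-maximal.

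Putting the costs together: the reduction to the $1$-maximal case is $O(n^2)$, the sweep visits $O(n^3)$ nodes at $O(1)$ each, and the single swap, if any, is $O(n)$, for a total of $O(n^3)$. The one delicate step is the constant-time evaluation of the independence number of $F$ at each trigger: it rests on the structural facts assembled above --- in particular that the only clique overlap inside $F$ occurs at $u_{xy}$ --- and on the counters $\mu_d(\cdot)$ being already maintained by the data structure, so that no $\Theta(n)$-length grid line is ever traversed except once at the end, after an improving swap has been identified.
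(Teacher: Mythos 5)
Your proposal is correct and follows essentially the same route as the paper's proof: reduce to the $1$-maximal case, sweep the $O(n^3)$ $2$-tight trigger nodes, decide the independence number of $F$ in $O(1)$ time from the counters $\mu_d(x),\mu_d(y)$ together with an $O(1)$ lookup of $u_{xy}$ in the array $C$ (counted only when $\mu_2(x)=\mu_1(y)=0$), and spend $O(n)$ only on the single improving swap. The added justifications (why $1$-maximality may be assumed, and why the only clique overlap in $F$ occurs at $u_{xy}$) are accurate elaborations of steps the paper states more briefly.
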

\begin{proof}
Similarly to the proof of \thmref{1},
we assume that $S$ is 1-maximal. 

Let $u$ be any 2-tight node and $x$ and $y$ be its solution neighbors. 
The number of 2-tight nodes is at most $|V\setminus S|\le|V|\le n^3$. 
We can recognize its solution neighbors $x$ and $y$ in $O(1)$ time by tracing the pointers from $u$.
The size of an MIS among $F$ can be computed in $O(1)$ time as follows;
First we count the number of the four grid lines
$\ell_{x,2}$, $\ell_{x,3}$, $\ell_{y,1}$ and $\ell_{y,3}$
such that a 1-tight node exists,
where we follow the dimension indices in \figref{twoswap}
without loss of generality. 
This can be done by checking whether $\mu_d(x)>0$ (or $\mu_d(y)>0$) or not. 
Furthermore, if $\mu_2(x)=\mu_1(y)=0$,
we check whether a 2-tight node $u_{xy}$ exists or not. 
The check can be done in $O(1)$ time by referring to the 3D array $C$. 
If it exists, we increase the MIS size by one. 
Finally, if the MIS size is no less than two,
there exists an improved solution. 
The non-solution nodes to be inserted other than $u$
are found in $O(n)$ time by searching the four grid lines.
\QED
\end{proof}

\subsubsection{$(3,n^2)$-Swap.}
Given a solution $S$, 
consider searching for an improved solution in its $(3,n^2)$-neighborhood.
To reduce the search space the following result on the general MIS problem 
is useful. 
\begin{thm}{\bf (Itoyanagi et al.~\cite{IHY.2011})}
\label{thm:ito}
Suppose that $S$ is a 2-maximal solution
and that there are a subset $R\subseteq S$ of solution nodes
and a subset $F\subseteq V\setminus S$ of non-solution nodes 
such that $|R|=3$, $|F|=4$ and $(S\setminus R)\cup F$ is a solution. 
We denote $F=\{u,v,w,t\}$, and without loss of generality,
we assume $\tau_S(u)\ge\tau_S(v)\ge\tau_S(w)\ge\tau_S(t)$. 
Then we are in either of the following two cases:
\begin{description}
\item[(I)] $u$ is 3-tight and $R=N_S(u)$.
\item[(II)] $u$ and $v$ are 2-tight
  such that they have exactly one solution node as a common solution neighbor,
  i.e., $|N_S(u)\cap N_S(v)|=1$, and $R=N_S(u)\cup N_S(v)$. 
\end{description}
\end{thm}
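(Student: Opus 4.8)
The plan is to work entirely inside the transformed MIS instance $G=(V,E)$ and to squeeze structural information out of the three hypotheses: that $(S\setminus R)\cup F$ is independent, that $|R|=3$ and $|F|=4$, and that $S$ is $2$-maximal (hence, recalling that a $p$-maximal solution is also $p'$-maximal for every $p'<p$, it is $1$-maximal and maximal). First I would record two immediate observations. Since $(S\setminus R)\cup F$ is an independent set, every $f\in F$ has $N_S(f)\subseteq R$, and $F$ itself is independent. Since a maximal solution has no free node at all, $\tau_S(f)\ge1$ for every $f\in F$; together with $N_S(f)\subseteq R$ this gives $1\le\tau_S(f)\le|R|=3$.

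The heart of the argument is a \emph{covering lemma}: for every proper subset $R'\subsetneq R$, the set $F'=\{f\in F\mid N_S(f)\subseteq R'\}$ satisfies $|F'|\le|R'|$. Indeed, each $f\in F'$ is free from $S\setminus R'$ because all of its solution neighbors lie in $R'$, and $F'\subseteq F$ is independent, so $(S\setminus R')\cup F'$ is again a solution; if $|F'|>|R'|$ it would be a strictly larger solution obtainable by a $(|R'|,n^2)$-swap with $|R'|\in\{0,1,2\}$, contradicting the $|R'|$-maximality of $S$. From this I would extract two consequences: taking $R'=R\setminus\{r\}$ for each $r\in R$ shows $\bigcup_{f\in F}N_S(f)=R$; and taking $R'$ to be a singleton shows that distinct $1$-tight members of $F$ have distinct solution neighbors, so $F$ contains at most three $1$-tight nodes.

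Then comes a short case analysis on tightnesses, where it is important to remember that the labeling $\tau_S(u)\ge\tau_S(v)\ge\tau_S(w)\ge\tau_S(t)$ is only a convenience and that one is free to break ties. If some member of $F$ is $3$-tight, its neighborhood is a $3$-element subset of the $3$-element set $R$, hence equals $R$; naming that node $u$ yields case (I). Otherwise every member of $F$ is $1$- or $2$-tight; since at most three are $1$-tight and $|F|=4$, at least one is $2$-tight, and one more use of the covering lemma rules out there being exactly one $2$-tight node (its pair of neighbors, together with the two $1$-tight nodes whose singletons hit those two elements, would overfill a $2$-subset of $R$). So at least two members of $F$ are $2$-tight. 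A final application of the covering lemma shows that not all $2$-tight members can share the same $2$-element neighborhood (a $2$-subset of $R$ would then absorb both of them and a forced extra $1$-tight node, or two $1$-tight nodes would collide on the remaining element). Hence there are two $2$-tight nodes with distinct neighborhoods; being distinct $2$-subsets of the $3$-set $R$ they meet in exactly one point and their union is $R$. Relabeling these two as $u$ and $v$ — legitimate since they sit among the highest-tightness nodes — gives case (II).

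I expect the main obstacle to be the bookkeeping in this last case analysis rather than any single hard step: the conclusion is really a statement about the \emph{existence} of a convenient labeling, not about every tightness-sorted labeling, so I must be careful to use the freedom to permute tied nodes, and I must check that the covering lemma is sharp enough to eliminate every ``bad'' way of distributing the sets $N_S(f)$ over $R$ — in particular the configuration in which several $2$-tight nodes repeat the same pair. Once the covering lemma is in place, the remaining verifications are elementary set arithmetic in a $3$-element universe.
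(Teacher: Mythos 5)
Your proof is correct, but there is nothing in the paper to compare it against: Theorem~\ref{thm:ito} is quoted from Itoyanagi et al.~\cite{IHY.2011} and the paper gives no proof of it (the only related material is the remark, in the $(2,n^2)$-swap discussion, that the analogous facts for $|R|=2$ are ``Lemmas~1 to 4 in \cite{ARW.2012}''). Your reconstruction is sound and self-contained. The covering lemma --- for every proper $R'\subsetneq R$, at most $|R'|$ members of $F$ have all their solution neighbours inside $R'$, since otherwise a $(|R'|,n^2)$-swap with $|R'|\le 2$ would improve the $2$-maximal $S$ --- is exactly the right engine, and it is the same maximality-based counting that underlies the cited lemmas of Andrade et al.\ and Itoyanagi et al. The two consequences you draw (every element of $R$ is covered; distinct $1$-tight nodes of $F$ have distinct neighbours, so at most three of the four are $1$-tight) suffice for the case split, and your subsequent set arithmetic in the $3$-element universe is correct: a $3$-tight node forces case (I); otherwise at least two nodes are $2$-tight, they cannot all share one pair (that pair would absorb three nodes, or two $1$-tight nodes would coincide on the third element), and two $2$-tight nodes with distinct pairs automatically satisfy $|N_S(u)\cap N_S(v)|=1$ and $N_S(u)\cup N_S(v)=R$. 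You are also right to read the ``without loss of generality'' labelling as an existential choice among tied tightnesses; with, say, three $2$-tight nodes two of which share a pair, only a suitable choice of $u,v$ realizes case (II), and your argument produces such a choice.
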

\begin{thm}
  \label{thm:3}
  Given a solution $S$,
  we can find an improved solution in its $(3,n^2)$-neighborhood
  or conclude that it is 3-maximal in $O(n^4)$ time. 
\end{thm}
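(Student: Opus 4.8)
The plan is to follow the template of \thmref{1} and \thmref{2}: reduce to a highly structured situation, enumerate a polynomial-size list of candidate removal sets, and test each of them in $O(1)$ time using the counters $\mu_d$ and $O(1)$ lookups in the array $C$. I would first assume that $S$ is $2$-maximal, since this can be checked and, if it fails, an improved solution produced in $O(n^3)$ time by the algorithms of \thmref{1} and \thmref{2}; in particular $S$ is then maximal and $1$-maximal. The combinatorial fact I would use is that a $2$-maximal $S$ fails to be $3$-maximal if and only if there exist $R\subseteq S$ with $|R|=3$ and four pairwise non-adjacent non-solution nodes all of whose solution neighbours lie in $R$; the forward direction builds the improved solution $(S\setminus R)\cup F$ directly, and for the converse $2$-maximality guarantees that an improved solution in the $(3,n^2)$-neighborhood arises by deleting exactly three solution nodes and inserting at least four non-solution ones, which are necessarily pairwise non-adjacent and have no solution neighbour outside the deleted set. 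Since $S$ is $2$-maximal, \thmref{ito} then restricts such an $R$, should it exist, to one of two forms: \textbf{(I)} $R=N_S(u)$ for a $3$-tight node $u$, or \textbf{(II)} $R=N_S(u)\cup N_S(v)$ for two $2$-tight nodes $u,v$ with $|N_S(u)\cap N_S(v)|=1$. The algorithm therefore enumerates the candidate sets $R$ of these two forms, tests each, and either reports the swap associated with a candidate that passes or else certifies that $S$ is $3$-maximal.

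Testing a fixed candidate $R=\{x,y,z\}$ (whose three members are pairwise non-adjacent, being in $S$) amounts to deciding whether the set $F^{\ast}$ of non-solution nodes $w$ with $\emptyset\ne N_S(w)\subseteq R$ contains an independent set of size at least $4$. Here I would exploit the grid structure of $G$: the nodes of $F^{\ast}$ all lie on the at most nine pairwise distinct grid lines through $x$, $y$ and $z$; the number of $1$-tight members of $F^{\ast}$ on each such line is read off directly from the $\mu_d$ counters, while every member of $F^{\ast}$ of tightness $2$ or $3$ sits at one of a bounded number of intersection points of these nine lines, so the complete list of such members is obtained by $O(1)$ lookups in $C$. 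With this information the subgraph induced on $F^{\ast}$ is a union of at most nine cliques (the lines) together with a bounded number of extra nodes straddling two or three of them, and whether its maximum independent set has size at least $4$ can be settled in $O(1)$ time by an exhaustive case analysis on the directions of the nine lines and on which coordinates $x,y,z$ share. I expect this case analysis --- a considerably more elaborate counterpart of the reasoning about the node $u_{xy}$ in the proof of \thmref{2} --- to be the main obstacle.

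It remains to enumerate and test the candidates within the claimed time bound. Form \textbf{(I)} contributes at most $|V|=O(n^3)$ candidates $R=N_S(u)$, each formed by following stored pointers and tested in $O(1)$ time, for $O(n^3)$ in total. For form \textbf{(II)} I would iterate over the common neighbour $y\in S$, of which there are at most $n^2$: scanning the three grid lines through $y$ in $O(n)$ time yields the set $P_y\subseteq S$ of the other solution neighbours of the $2$-tight neighbours of $y$, and $|P_y|\le 3(n-1)$, so every unordered pair of distinct nodes of $P_y$ gives a candidate $R=\{x,y,z\}$ to be tested in $O(1)$ time; this is $O(n^2)$ work for each $y$, hence $O(n^4)$ altogether. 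As soon as some candidate passes, the improved solution is constructed --- deleting $R$ and inserting a maximum independent set of $F^{\ast}$, located by scanning the nine grid lines --- in $O(n)$ time, which does not affect the bound. The overall running time is therefore $O(n^4)$.
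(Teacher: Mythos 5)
Your proposal matches the paper's proof in all essentials: assume $2$-maximality, invoke \thmref{ito} to restrict the removal set $R$ to forms (I) and (II), enumerate the $O(n^3)$ resp.\ $O(n^4)$ candidates, and decide in $O(1)$ time per candidate (via the $\mu_d$ counters and lookups in $C$) whether the free nodes on the $O(1)$ grid lines through $R$ contain a large enough independent set. The only differences are cosmetic --- you test $\mathrm{MIS}(F^\ast)\ge 4$ directly rather than first committing to inserting the trigger nodes, and you enumerate case (II) by pairs of ``other'' solution neighbours rather than pairs of $2$-tight neighbours --- and you defer exactly the same enumerative case analysis that the paper itself declares ``rather complicated'' and omits.
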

\begin{proof}
We assume that $S$ is 2-maximal, similarly to previous theorems. 
Below we prove that the search for (I) in \thmref{ito}
takes $O(n^3)$ time, and that for (II) takes $O(n^4)$ time.

For (I), we search all 3-tight nodes. There are $O(n^3)$ 3-tight nodes.
For each 3-tight node $u$, the solution nodes to be removed
are the three solution neighbors of $u$, 
which can be decided in $O(1)$ time.
Let $x$, $y$ and $z$ be the three solution neighbors,
and $F$ be the set of free nodes from $(S\setminus\{x,y,z\})\cup \{u\}$. 
The nodes in $F$ should be on solid grid lines in \figref{threeswap}~(I). 
There are at most three 2-tight nodes among $F$,
that is, $u_{xy}$, $u_{xz}$ and $u_{yz}$. 
Similarly to the proof of \thmref{2}, every MIS among the free nodes
contains $u_{xy}$ only when no 1-tight node exists on $\ell_{x,2}$ or $\ell_{y,1}$,
i.e., $\mu_2(x)=\mu_1(y)=0$. 
The condition on which $u_{xz}$ (or $u_{yz}$)
belongs to every MIS is analogous. 
Taking these into account, 
we can decide the MIS size among the free nodes in $O(1)$ time.
If it is no less than three, then an improved solution exists;
an MIS to be inserted can be decided in $O(n)$ time. 

\begin{figure}[t]
  \centering
  \begin{tabular}{ccc}
    \includegraphics[width=5.0cm,keepaspectratio]{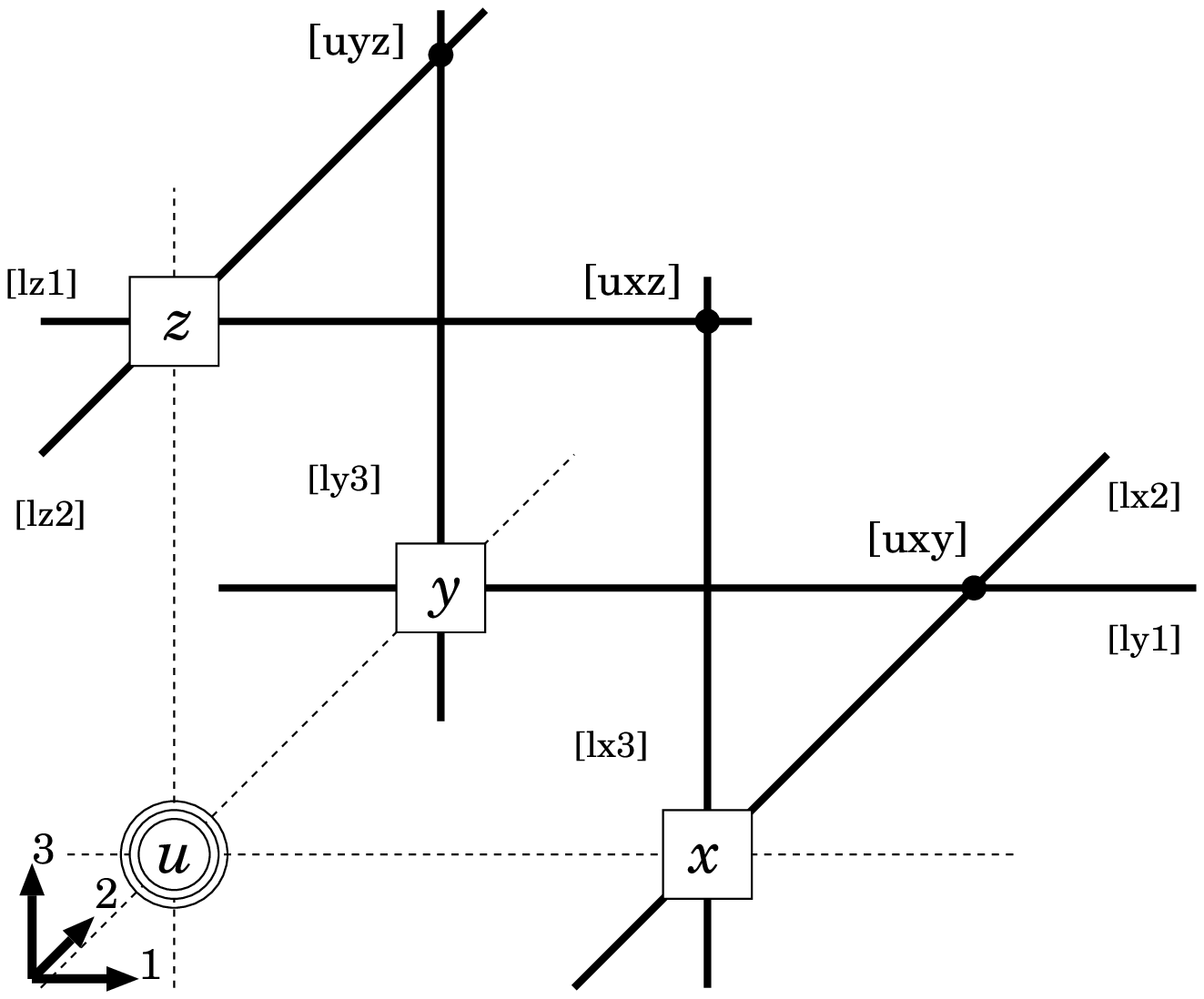} &\ \ \ &
    \includegraphics[width=5.0cm,keepaspectratio]{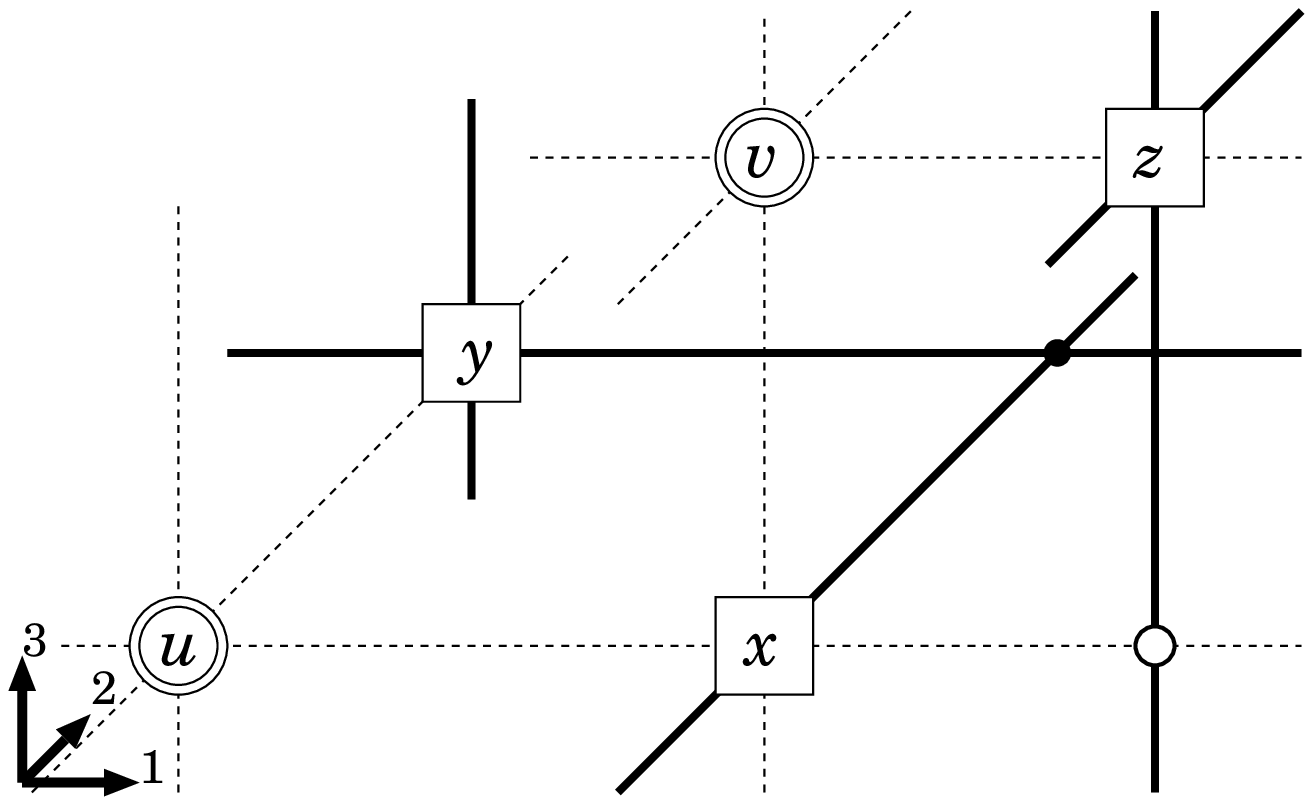}\\
    {\bf (I)} && {\bf (II)}
  \end{tabular}
  \caption{An illustration of the cases (I) and (II) such that $(3,n^2)$-swap can occur}
  \label{fig:threeswap}
\end{figure}

Concerning (II),
we search every solution node $x\in S$ and
every pair of its 2-tight neighbors that are not on the same grid line. 
Thus there are $O(n^4)$ pairs in all 
since there are at most $n^2$ solution nodes, and for each $x\in S$,
there are $O(n^2)$ pairs of 2-tight neighbors. 
Given $x$ and a pair $\{u,v\}$ of its 2-tight neighbors,
the removed solution nodes are three solution nodes in $N_S(u)\cup N_S(v)$, 
which includes $x$. They can be found in $O(1)$ time. 
Let $N_S(u)\cup N_S(v)=\{x,y,z\}$. 
We illustrate an example of this case in \figref{threeswap}~(ii). 
The free nodes from $(S\setminus\{x,y,z\})\cup\{u,v\}$
should be on the solid grid lines in the figure. 
By means of enumerative argument,
we can show that the MIS size among the free nodes
can be decided in $O(1)$ time. 
The argument is rather complicated, and we omit details here. 
If the MIS size is no less than two, then an improved solution exists;
an MIS to be inserted can be decided in $O(n)$ time. 
\QED
\end{proof}

\subsubsection{Trellis-Swap.}
Now let us introduce Trellis-swap. 
Note that, in $(2,n^2)$-swap,
all removed solution nodes belong to such a 2D facet
that is induced by fixing the value of one dimension in the 3D space.
We can regard that $(1,n^2)$-swap is also in the case.
We invent a more general swap operation for such a case. 

\begin{defn}
Suppose that a solution $S$ is given. 
Let $R\subseteq S$ be a maximal subset such that
all nodes in $R$ are contained in a 2D facet
that is generated by setting $v_d=k$ for some $d\in[3]$ and $k\in[n]$. 
We denote by $F_1$ (resp., $F_2$)
the set of all 1-tight (resp., 2-tight) nodes 
such that the unique solution neighbor is contained in $R$
(resp., both of the solution neighbors are contained in $R$). 
We call the subgraph induced by $R\cup F_1\cup F_2$
a {\em trellis\/} (with respect to $R$).
\end{defn} 
One sees that all nodes in $R\cup F_2$ are on the same 2D facet,
whereas some nodes in $F_1$ are also on the facet,
but other nodes in $F_1$ may be out of the facet like a hanging vine. 
Note that removing $R$ from $S$ makes all nodes in the trellis free. 
We define the {\em Trellis-neighborhood of a solution $S$\/}
as the set of all solutions that can be obtained by
removing any $R$ from $S$
and then inserting an independent set among the trellis
into $S\setminus R$. 

\begin{thm}
\label{thm:trellis}
Given a solution $S$,
we can find an improved solution in the Trellis-neighborhood
or conclude that no such solution exists in $O(n^{3.5})$ time. 
\end{thm}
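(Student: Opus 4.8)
The plan is to process one two-dimensional facet at a time and reduce the search on that facet to a single cardinality bipartite matching. As in \thmref{1}--\thmref{3}, if $S$ is not maximal we improve it in $O(n)$ time and are done, so assume $S$ is maximal and no node is free. First I would observe that the Trellis-neighborhood is generated by only $3n$ subsets $R$: if $R\subseteq S$ is a nonempty maximal set all of whose nodes lie in the facet $F=\{v\mid v_d=k\}$, then $R\subseteq S\cap F$ and $S\cap F$ also lies in $F$, so maximality forces $R=S\cap F$. Hence it suffices to show that, for a fixed facet $F$ with $R:=S\cap F$, one can decide whether the corresponding trellis contains an improving move, and produce one if so, in $O(n^{2.5})$ time; summing over the $3n$ facets yields $O(n^{3.5})$.

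Fix $F$; take $d=3$, $F=\{v\in[n]^3\mid v_3=k\}$, without loss of generality. Since $S$ is independent, no two nodes of $R$ share a grid line, so $R$ is a partial permutation matrix inside the $n\times n$ facet. The core of the argument is a structural description of the trellis $T$. I would split $F_1$ into the on-facet part $F_1'$ and the off-facet part $F_1''$ and establish three facts. (i) The \emph{core} $C:=R\cup F_1'\cup F_2$ consists of facet cells, and an edge of $G$ between two of them can come only from the two facet directions (a direction-$3$ line through a facet cell meets a facet cell only at the cell itself); thus $C$ induces a subgraph of the $n\times n$ rook's graph (two cells adjacent iff they share a row or a column). (ii) Every $v\in F_1''$ has the form $v=(x_1,x_2,t)$ with $t\ne k$ for a \emph{unique} $x=(x_1,x_2,k)\in R$, namely $v\in\ell_{x,3}$, and inside $T$ it is adjacent only to $x$ and to the other members of $F_1''$ on $\ell_{x,3}$. (iii) The members of $F_1''$ on a common $\ell_{x,3}$ form a clique (they lie on one grid line), while those attached to distinct $x,x'\in R$ are non-adjacent (here one uses $x_1\ne x_1'$ and $x_2\ne x_2'$). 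So $T$ is exactly a rook-subgraph $C$ with, for each $x\in R$, a pendant clique of ``vines'' hung on $x$.

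Next I would turn this structure into a matching formulation. Let $R_\pi\subseteq R$ be those $x$ that own at least one vine. For any independent set of $T$, its core part $I_C$ is rook-independent in $C$, and its vine part contributes at most one node for each $x\in R_\pi$ with $x\notin I_C$, and such a node can always be added. Hence, writing $\alpha$ for the maximum independent set size of $T$,
\[
\alpha \;=\; \max_{I_C}\bigl(|I_C|+|R_\pi\setminus I_C|\bigr) \;=\; |R_\pi|+\max_{I_C}\,|I_C\setminus R_\pi| \;=\; |R_\pi|+\alpha',
\]
where $I_C$ ranges over rook-independent subsets of $C$ and $\alpha'$ is the maximum independent set size of the subgraph induced by $C\setminus R_\pi$. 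Because independent sets of a rook-subgraph are exactly matchings of its bipartite row--column incidence graph, $\alpha'=\nu(B')$, the maximum matching size of the bipartite graph $B'$ on the $\le n$ rows and $\le n$ columns of $F$ having one edge per cell of $C\setminus R_\pi$. Thus $T$ contains an improving move iff $|R_\pi|+\nu(B')>|R|$, i.e.\ iff $\nu(B')>|R\setminus R_\pi|$; and when this holds, a maximum matching of $B'$ together with one vine for each $x\in R_\pi$ is an independent set of $T$ of size exceeding $|R|$, which gives the improved solution explicitly.

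For the running time: for one facet, a single scan of its $n^2$ cells (using the stored tightnesses and solution-neighbor pointers, and the array $C$ for membership tests) identifies $C$, $R$, and $R_\pi$ and builds $B'$, which has $O(n)$ vertices and $O(n^2)$ edges; this is $O(n^2)$ time. Running Hopcroft--Karp on $B'$ costs $O(\sqrt{n}\cdot n^2)=O(n^{2.5})$. Over the $3n$ facets this is $O(n^{3.5})$; extracting the improved solution and updating the permutation representation costs $O(n^2)$ and is absorbed. I expect the main obstacle to be the structural part, items (i)--(iii), together with the algebraic step that the vine contribution collapses to the \emph{unweighted} matching $\nu(B')$ rather than to a weighted matching on $C$ with a weight loss at each $x\in R_\pi$; this unweighted reduction is exactly what keeps the per-facet cost at $O(n^{2.5})$, whereas a weighted matching per facet would only give $O(n^4)$ overall.
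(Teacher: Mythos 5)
Your proposal is correct and follows essentially the same route as the paper's proof: fix each of the $3n$ facets, peel off the out-of-facet $1$-tight ``vines'' together with the solution nodes that own them (your $R_\pi$ is the paper's $R''=\bigcup_{u\in F''_1}N_S(u)$), and reduce the remaining on-facet MIS to an unweighted bipartite matching on the $2n$ grid lines solved by Hopcroft--Karp in $O(n^{2.5})$ per facet. Your write-up merely makes explicit the structural facts and the identity $\alpha=|R_\pi|+\nu(B')$ that the paper leaves as ``one can easily show.''
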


\begin{proof}
We explain how we compute an MIS in the trellis for a given $R$. 
Let us partition $F_1$ into $F_1=F'_1\cup F''_1$
so that $F'_1$ (resp., $F''_1$) is the subset of nodes 
on (resp., out of) the considered 2D facet. 
The node set $F''_1$ induces a subgraph that consists of cliques,
each of which is formed by 1-tight nodes on a grid line perpendicular to the 2D facet. 
One can easily show that, among MISs in the trellis,
there is one such that a solution node is chosen from every clique. 
Intending such an MIS,
we ignore the nodes in $F''_1$
and their solution neighbors; 
let $R''\subseteq R$ be a subset such that $R''=\bigcup_{u\in F''_1}N_S(u)$. 
Now that we have chosen nodes from $F''_1$,
we can no longer choose the nodes in $R''$. 
Let $R'=R\setminus R''$. 
The remaining task is to compute an MIS 
from $R'\cup F'_1\cup F_2$.
All the nodes in $R'\cup F'_1\cup F_2$ are on a 2D facet, 
and we are asked to choose as many nodes as possible 
so that, from each of $2n$ grid lines on the facet, a node is chosen at most once. 
We see that the problem is reduced to computing a maximum matching in a certain bipartite graph;
a grid line on the 2D facet corresponds to a node in the bipartite graph,
and the bipartition of nodes is determined by the directions of grid lines.
A node $v\in R'\cup F'_1\cup F_2$ on the 2D facet corresponds to an edge in the bipartite graph 
such that two nodes are joined 
if $v$ is on the intersecting point of the corresponding two grid lines.  

We have $3n$ 2D facets. 
For each 2D facet, 
it takes $O(n^2)$ time to recognize
the node sets $R$, $F_1$ and $F_2$ and partitions $R=R'\cup R''$ and $F_1=F'_1\cup F''_1$,
and then to construct the bipartite graph  
since the bipartite graph has $2n$ nodes and $O(n^2)$ edges.
We need $O(n^{2.5})$ time to compute a maximum matching~\cite{HK.1973}. 
\QED
\end{proof}

\subsection{Approximation Factors and Computation Time}
\label{sec:ls-time}
In a maximization problem instance,
a {\em $\rho$-approximate solution\/} $(\rho\in[0,1])$
is a solution whose size is at least the factor $\rho$ of the optimal size. 
Hajirasouliha et al.~\cite{HJKS.2007} analyzed
approximation factors of $(p,p+1)$-maximal solutions,
using Hurkens and Schrijver's classical result~\cite{HS.1989}
on the general set packing problem. 
From this and Theorems~\ref{thm:1}, \ref{thm:2}, \ref{thm:3} and \ref{thm:trellis},
and since the solution size is at most $n^2$,
we have the following theorem. 
\begin{thm}
Any 1-maximal, 2-maximal, 3-maximal and Trellis-maximal solutions
are $1/2$-, $5/9$-, $3/5$- and $5/9$-approximate solutions respectively. 
Furthermore, these can be obtained in $O(n^4)$, $O(n^5)$, $O(n^6)$ and $O(n^{5.5})$ time
by extending an arbitrary solution by means of the local search. 
\end{thm}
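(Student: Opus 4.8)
The plan is to prove the two halves of the statement---approximation factors and running times---separately, using the per-iteration bounds of Theorems~\ref{thm:1}--\ref{thm:3} and \ref{thm:trellis} together with the external analysis of Hajirasouliha et al.~\cite{HJKS.2007}. For the approximation factors I would first record the elementary but essential fact that, for every $p$, a $p$-maximal (i.e.\ $(p,n^2)$-maximal) solution is in particular $(p,p+1)$-maximal: a $(p,p+1)$-swap removes exactly $p$ nodes and inserts at most $p+1\le n^2$, so it is a special case of a $(p,n^2)$-swap, and hence an improving $(p,p+1)$-swap would already violate $(p,n^2)$-maximality. The factors $1/2$, $5/9$ and $3/5$ for $p=1,2,3$ then follow immediately by plugging $p=1,2,3$ into Hajirasouliha et al.'s bound on $(p,p+1)$-maximal solutions (which rests on Hurkens--Schrijver~\cite{HS.1989} via the observation that a PLS set is an instance of $3$-set packing on the triples formed by the three grid lines through each node), since the solution size is at most $n^2$.

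For the Trellis-maximal case I would show that a Trellis-maximal solution is $2$-maximal, after which the $5/9$ factor follows from the previous paragraph. I would argue the contrapositive. If $S$ is not maximal, the local search inserts a free node, so take $S$ maximal and admitting an improving $(1,n^2)$- or $(2,n^2)$-swap. Consider the $(2,n^2)$ case in the notation of \figref{twoswap}: the swap deletes the two solution neighbours $x,y$ of a $2$-tight node $u$ and inserts $u$ together with a set $F'$ of nodes that are $1$- or $2$-tight with all solution neighbours inside $\{x,y\}$, where $1+|F'|\ge 3$. The geometric observation is that $x$ and $y$, being two \emph{distinct} solution neighbours of $u$, differ from $u$ in two distinct coordinates and hence agree with $u$ on the third, so $x$, $y$ and $u$ lie on a common 2D facet. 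Taking $R$ to be the maximal subset of $S$ on that facet, we have $\{x,y\}\subseteq R$; both $u$ and every node of $F'$ then lie in the trellis with respect to $R$; and $(R\setminus\{x,y\})\cup\{u\}\cup F'$ is an independent set in that trellis of size $|R|-2+(1+|F'|)\ge |R|+1$, since it is contained in the solution $(S\setminus\{x,y\})\cup\{u\}\cup F'$. Thus removing $R$ and inserting this set is an improving Trellis-swap, contradicting Trellis-maximality; the $(1,n^2)$ case is handled the same way, with $R$ the maximal facet-subset through the single deleted node. I expect this implication to be the only genuinely non-routine part: everything hinges on the fact that the two nodes deleted by a $(2,n^2)$-swap always share a 2D facet, which is what lets us enlarge the deleted set to all of $R$ while compensating the inserted set with the now-free solution nodes $R\setminus\{x,y\}$ and still gaining one.

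For the running times I would observe that the local search starts from an arbitrary solution and repeatedly replaces the current $S$ by an improved solution returned by the relevant neighborhood search; each replacement strictly increases $|S|$, and $|S|\le n^2$ throughout, so at most $n^2$ improving iterations occur, followed by one terminating call. Each iteration, including performing the swap, costs $O(n^{p+1})$ for $p\in\{1,2,3\}$ by Theorems~\ref{thm:1}--\ref{thm:3} and $O(n^{3.5})$ for the Trellis-neighborhood by \thmref{trellis}, and the $O(n^3)$ one-time construction of the data structure is dominated. Multiplying the iteration count by the per-iteration cost yields $O(n^{4})$, $O(n^{5})$, $O(n^{6})$ and $O(n^{5.5})$ respectively, as claimed.
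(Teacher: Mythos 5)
Your proof is correct and follows essentially the same route as the paper, which gives no written proof and instead derives the theorem in one sentence from Hajirasouliha et al.'s bound on $(p,p+1)$-maximal solutions, the per-iteration costs in Theorems~\ref{thm:1}--\ref{thm:trellis}, and the bound $|S|\le n^2$ on the number of improving iterations. Your explicit argument that a Trellis-maximal (and maximal) solution is $2$-maximal --- enlarging $\{x,y\}$ to the maximal facet set $R$ and compensating the inserted set with $R\setminus\{x,y\}$ --- is a correct filling-in of the one step the paper leaves implicit in its assertion that Trellis-swap generalizes $(1,n^2)$- and $(2,n^2)$-swap.
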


\section{Computational Study}
\label{sec:exp}

In this section we illustrate how our local search computes
high-quality solutions efficiently 
through a computational study.
We design a prototype iterated local search (ILS) algorithm
and show that it is more likely to deliver better solutions 
than modern optimization solvers.
Furthermore, the best ILS variant is one based on Trellis-swap. 

\subsubsection{ILS Algorithm.}

We describe our ILS algorithm in \algref{ILS},
which is no better than a conventional one. 
Let us describe how to construct the initial solution of the next local search
by ``kicking'' $S^\ast$ (line \ref{line:ILSkick}).
For this, the algorithm copies $S^\ast$ to $S_0$ and
``forces'' to insert $k$ non-solution nodes into $S_0$.
Specifically, it repeats the followings $k$ times;
it chooses a non-solution node $u\notin S_0$,
removes the set $N_{S_0}(u)$ of its solution neighbors from the solution
(i.e., $S_0\gets S_0\setminus N_{S_0}(u)$),
and inserts $u$ into the solution (i.e., $S_0\gets S_0\cup\{u\}$). 
If there appear free nodes, one is chosen at random
and inserted into $S_0$ repeatedly until $S_0$ becomes maximal. 
The number $k$ is set to $k=\kappa$ with probability $1/2^\kappa$. 
We choose the $k$ nodes randomly from all the non-solution nodes,
except the first one. 
We choose the first one with great care so that 
{\bf (i)} trivial cycling is avoided
and {\bf (ii)} the diversity of search
is attained. 
For (i), we restrict the candidates to nodes in $N(S_0')$,
where $S_0'\subseteq S_0$ is a subset of solution nodes such that 
$S_0'=\{x\in S_0\mid \exists d\in[3],\ \mu_d(x)>0\}$. 
In other words, $S_0'$ is a subset of solution nodes
that have a 1-tight neighbor.
Then for (ii), we employ the {\em soft-tabu\/} approach~\cite{ARW.2012};
we choose the non-solution node that has been
outside the solution for the longest time among $N(S_0')$.\footnote{When there is no 1-tight node, $S_0'$ becomes an empty set. In this case, we use $N(S_0)$ instead of $N(S_0')$.}
We omit the details, but the mechanism for (i) dramatically
reduces chances that the next local search returns us 
to the incumbent solution $S^\ast$.  
In fact, it is substantially effective in enhancing the performance of the algorithm.
%

\begin{algorithm}[t!]
\caption{An ILS algorithm for the PLSE problem}
\label{alg:ILS}
\begin{algorithmic}[1]
\State $S_0\gets$an arbitrary solution and $S^\ast\gets S_0$
\Comment{$S^\ast$ is the incumbent solution.}
\While{the computation time does not exceed the given time limit}
\State compute a locally optimal solution $S$ by local search that starts with an initial solution $S_0$
\If{$|S|\ge|S^\ast|$}
\label{line:ILSif}
\State $S^\ast\gets S$
\EndIf
\label{line:ILSendif}
\State compute the initial solution $S_0$ of the next local search by ``kicking'' $S^\ast$
\label{line:ILSkick}
\EndWhile
\State output $S^\ast$
\end{algorithmic}
\end{algorithm}

We consider four variant ILS algorithms
that employ different neighborhoods from each other:
1-ILS, 2-ILS, Tr-ILS and 3-ILS. 
1-ILS is the ILS algorithm that iterates 1-LS 
(i.e., the local search with $(1,n^2)$-neighborhood search algorithm) in the manner of \algref{ILS}.
The terms 2-LS, 2-ILS, Tr-LS, Tr-ILS, 3-LS and 3-ILS are analogous. 

\subsubsection{Experimental Set-up.} 
All the experiments are conducted on a workstation
that carries Intel$^{\textregistered}$ Core$^{\texttrademark}$ i7-4770 Processor (up to 3.90GHz
by means of Turbo Boost Technology)
and 8GB main memory. The installed OS is Ubuntu 14.04.1.

Benchmark instances are random PLSs. 
We generate the instances by utilizing each of the two schemes that are well-known in the literature~\cite{B.2005,GS.2002}:
quasigroup completion (QC) and quasigroup with holes (QWH).  
Note that a PLS is parametrized by the grid length $n$
and the ratio $r\in[0,1]$ of pre-assigned symbols over the $n\times n$ grid. 
Starting with an empty assignment,
QC repeats assigning a symbol to an empty cell randomly so that the resulting assignment is a PLS,
until $\lfloor n^2r\rfloor$ cells are assigned symbols. 
On the other hand, QWH generates a PLS by removing $n^2-\lfloor n^2r\rfloor$ symbols from an arbitrary Latin square
so that $\lfloor n^2r\rfloor$ symbols remain. 
Note that a QC instance does not necessarily
admit a complete Latin square as an optimal solution,
whereas a QWH instance always does. 
%
Here we show experimental results only on QC instances
due to space limitation. 
We note that, however, most of the observed tendencies are quite similar between QC and QWH. 
One can download all the instances
and the solution sizes achieved by the ILS algorithms and by the competitors
from the author's website (\url{http://puzzle.haraguchi-s.otaru-uc.ac.jp/PLSE/}).

Let us mention what kind of instance is ``hard'' in general.
%
Of course an instance becomes harder when $n$ is larger. 
Then we set the grid length $n$ to 40, 50 and 60,
which are relatively large compared with previous studies (e.g.,~\cite{GS.2002}). 
For a fixed $n$,
the problem has easy-hard-easy phase transition. 
Then we regard instances with an intermediate $r$ ``hard''.

For competitors, we employ two exact solvers and one heuristic solver. 
For the former, 
we employ the optimization solver for integer programming model (CPX-IP) 
and the one for constraint optimization model (CPX-CP)
from IBM ILOG CPLEX (ver.~12.6)~\cite{CPLEX}. 
It is easy to formulate the PLSE problem by these models (e.g., see~\cite{GS.2002}).  
For the latter, we employ \LSSOL\ (ver.~4.5)~\cite{LSSOL} (LSSOL), which is
a general heuristic solver based on local search. 
Hopefully our ILS algorithm will outperform LSSOL 
since ours is specialized to the PLSE problem,
whereas LSSOL is developed for general discrete optimization problems. 
All the parameters are set to default values
except that, in CPX-CP,
{\tt DefaultInferenceLevel\/} and {\tt All\allowbreak Diff\allowbreak Inference\allowbreak Level\/} are set to {\tt extended\/}. 
%
We set the time limit of all the solvers 
(including the ILS algorithms) to 30 seconds.

\subsubsection{Results.}

We show how the ILS algorithms and the competitors 
improve the initial solution $S_0$ in \tabref{ils-sol}.\invis{\footnote{The results on Tr-ILS and 3-ILS are different from those in the first submitted version of the paper
  since we removed trivial overheads of their implementation. By the modification, Tr-ILS becomes better and thereby outperforms 2-ILS,
  while the performance of 3-ILS does not change to a large extent.}}
The $S_0$ is generated by a constructive algorithm named G5 in \cite{AKW.2008},
which is a ``look-ahead'' minimum-degree greedy algorithm.
We confirmed in our preliminary experiments
that G5 is the best among several simple constructive algorithms.
For each pair $(n,r)$, 
a number in the 3rd column is the average of $|L|+|S_0|$
(i.e., the given PLS size $|L|=\lfloor n^2r\rfloor$ plus the initial solution size)
and a number in the 4th to 10th columns is the average of
the improved size over 100 instances. 
A bold number (resp., a number with $\ast$)
indicates the 1st largest (resp., the 2nd largest) improvement among all. 
An underlined number indicates that
an optimal solution is found in all the 100 instances.
We can decide a solution $S$ to be an optimal solution
if $L\cup S$ is a complete Latin square or
an exact solver (i.e., CPX-IP or CPX-CP) reports so. 

\begin{table}[t!]
  \centering
  \caption{Improved sizes brought by the ILS algorithms and the competitors}
  \label{tab:ils-sol}
  \begin{tabular}{cc|r|rrrr|rrr}
    \hline
$n$ & $r$ &  $|L|+|S_0|$ & 1-ILS & 2-ILS &  Tr-ILS & 3-ILS &  CPX-IP &  CPX-CP & LSSOL\\
    \hline
40	&	0.3	&	1597.68	&	\underline{\bf2.32}	&	\underline{\bf2.32}	&	\underline{\bf2.32}	&	\underline{\bf2.32}	&	0.00	&	\underline{\bf2.32}	&	$\ast$0.99	\\
	&	0.4	&	1595.28	&	\underline{\bf4.72}	&	\underline{\bf4.72}	&	\underline{\bf4.72}	&	\underline{\bf4.72}	&	0.00	&	\underline{\bf4.72}	&	$\ast$2.72	\\
	&	0.5	&	1591.18	&	\underline{\bf8.82}	&	$\ast$8.80	&	\underline{\bf8.82}	&	$\ast$8.80	&	0.00	&	5.35	&	5.51	\\
	&	0.6	&	1578.89	&	16.64	&	$\ast$17.01	&	\bf17.92	&	16.69	&	5.52	&	5.85	&	11.80	\\
	&	0.7	&	1550.10	&	18.35	&	$\ast$18.76	&	\bf18.78	&	18.33	&	17.37	&	8.68	&	15.01	\\
	&	0.8	&	1508.98	&	$\ast$5.05	&	\underline{\bf5.06}	&	\underline{\bf5.06}	&	5.04	&	\underline{\bf5.06}	&	3.68	&	5.00	\\
\hline																			
50	&	0.3	&	2496.03	&	\underline{\bf3.97}	&	\underline{\bf3.97}	&	\underline{\bf3.97}	&	\underline{\bf3.97}	&	0.00	&	$\ast$3.84	&	0.32	\\
	&	0.4	&	2493.78	&	\underline{\bf6.22}	&	\underline{\bf6.22}	&	\underline{\bf6.22}	&	$\ast$6.20	&	0.00	&	4.24	&	0.87	\\
	&	0.5	&	2488.52	&	11.38	&	$\ast$11.46	&	\underline{\bf11.48}	&	11.34	&	0.00	&	1.40	&	4.44	\\
	&	0.6	&	2476.21	&	20.22	&	$\ast$21.38	&	\bf21.97	&	20.06	&	0.00	&	2.66	&	13.00	\\
	&	0.7	&	2442.21	&	28.03	&	$\ast$28.30	&	\bf28.58	&	27.68	&	4.19	&	8.83	&	21.24	\\
	&	0.8	&	2382.07	&	$\ast$12.18	&	12.14	&	12.16	&	12.08	&	\bf12.51	&	6.03	&	11.60	\\
\hline																			
60	&	0.3	&	3593.07	&	\underline{\bf6.93}	&	\underline{\bf6.93}	&	\underline{\bf6.93}	&	$\ast$6.85	&	0.00	&	5.22	&	0.13	\\
	&	0.4	&	3590.68	&	\underline{\bf9.32}	&	\underline{\bf9.32}	&	\underline{\bf9.32}	&	$\ast$9.20	&	0.00	&	1.87	&	0.49	\\
	&	0.5	&	3585.29	&	$\ast$14.65	&	$\ast$14.65	&	\bf14.67	&	14.06	&	0.00	&	0.54	&	2.21	\\
	&	0.6	&	3572.61	&	24.16	&	\bf25.07	&	$\ast$25.02	&	22.62	&	0.00	&	1.09	&	12.91	\\
	&	0.7	&	3534.62	&	$\ast$37.70	&	37.54	&	\bf39.05	&	35.73	&	0.09	&	5.83	&	26.43	\\
	&	0.8	&	3456.59	&	$\ast$22.15	&	\bf22.16	&	$\ast$22.15	&	21.99	&	21.99	&	7.55	&	19.85	\\
\hline
\end{tabular}
\end{table}

Obviously the ILS algorithms outperform the competitors in many $(n,r)$'-s.
%
We claim that Tr-ILS should be the best  
among the four ILS algorithms. 
Clearly 3-ILS is inferior to others.
The remaining three algorithms seem to be competitive,
but Tr-ILS ranks first or second most frequently.

Concerning the competitors,
CPX-CP performs well for 
under-constrained ``easy'' instances (i.e., $r\le0.4$),
whereas CPX-IP does well for
over-constrained ``easy'' instances (i.e., $r\ge0.7$).
LSSOL is relatively good for all $r$'-s
and outstanding especially for ``hard'' instances with $0.5\le r\le 0.7$. 

To observe the behavior of the ILS algorithms in detail,
we investigate how they improve the solution
in the first LS and in the first 5, 10 and 30 seconds in \tabref{ils-5sec} $(n=60)$.
We also show the averaged computation time of a single run of LS in the rightmost column. 
Most of the improvements over the 30 seconds 
are made in earlier periods. 
It is remarkable that the improvements made by the ILS algorithms
in the first 5 seconds
are larger than those made by the competitors in 30 seconds in all the shown cases (see \tabref{ils-sol}).

\begin{table}[t!]
  \centering
  \caption{Improved sizes in the first LS, 5s, 10s and 30s, and averaged computation times of a single run of LS $(n=60)$;
    when $r$ is smaller (resp., larger),
    the number $|V|$ of nodes in the graph becomes larger (resp., smaller),
    and then LS takes more (resp., less) computation time}
  \label{tab:ils-5sec}
  \begin{tabular}{cc|rlrlrlr|r}
    \hline
$r$ & Algorithm & \multicolumn{7}{c|}{Improved size} & \multicolumn{1}{c}{Computation}\\
& & 1st LS &\ \ \ \ \ 
    &\multicolumn{1}{c}{5s}&\ \ \ \ \ 
    & \multicolumn{1}{c}{10s}&\ \ \ \ \ 
    & \multicolumn{1}{c|}{30s}
    & \multicolumn{1}{c}{time of LS (ms)}\\
    \hline
0.4	&	1-ILS	&	0.29	&&	9.24	&&	9.32	&&	9.32	&	1.44	\\
	&	2-ILS	&	1.03	&&	9.28	&&	9.30	&&	9.32	&	1.84	\\
	&	Tr-ILS	&	1.13	&&	9.28	&&	9.32	&&	9.32	&	2.29	\\
	&	3-ILS	&	2.10	&&	6.73	&&	7.91	&&	9.20	&	42.80	\\
\hline													
0.5	&	1-ILS	&	0.60	&&	14.42	&&	14.58	&&	14.65	&	0.83	\\
	&	2-ILS	&	1.63	&&	14.27	&&	14.47	&&	14.65	&	1.00	\\
	&	Tr-ILS	&	1.65	&&	14.52	&&	14.65	&&	14.67	&	1.04	\\
	&	3-ILS	&	3.08	&&	10.79	&&	12.28	&&	14.06	&	16.82	\\
\hline													
0.6	&	1-ILS	&	0.75	&&	22.58	&&	23.23	&&	24.16	&	0.45	\\
	&	2-ILS	&	2.69	&&	22.78	&&	23.80	&&	25.07	&	0.55	\\
	&	Tr-ILS	&	2.96	&&	23.38	&&	24.06	&&	25.02	&	0.45	\\
	&	3-ILS	&	4.95	&&	17.86	&&	20.18	&&	22.62	&	5.91	\\
\hline													
0.7	&	1-ILS	&	1.57	&&	35.33	&&	36.66	&&	37.70	&	0.24	\\
	&	2-ILS	&	4.33	&&	35.16	&&	36.54	&&	37.54	&	0.28	\\
	&	Tr-ILS	&	5.41	&&	35.90	&&	37.52	&&	39.05	&	0.27	\\
	&	3-ILS	&	7.95	&&	29.08	&&	31.91	&&	35.73	&	2.14	\\
\hline													
\end{tabular}
\end{table}

The reason why Tr-ILS is the best among the ILS variants 
must be its efficiency; 
as can be seen in the rightmost column,
Tr-LS is so fast as 2-LS, or even faster in some $r$'-s,
although Trellis-swap is a generalization of $(2,n^2)$-swap. 
We implemented the Trellis-neighborhood search algorithm 
so that it runs in $O(\alpha n^{2.5})$ time rather than 
in $O(n^{3.5})$ time,
where $\alpha$ denotes the number of 1-tight nodes. 
The implementation is expected to be faster since,
to the extent of our experiment, the number $\alpha$ is much smaller than $n$. 
We will address the detail of this issue in our future papers.

On the other hand, 3-LS is much more time-consuming than the others;
the computation time of 3-LS is about 10 to 40 times
those of the other three LSs.  
3-ILS may not iterate so sufficient a number of 3-LSs
that the diversity of search is not attained to a sufficient level. 
It is true that, in the 1st LS, for all $r$'-s,
3-LS finds the best solution,
followed by Tr-LS, 2-LS and 1-LS; 
a single LS with a larger neighborhood will find a better solution
than one with a smaller neighborhood.  
By the first 5 seconds pass, however,
3-ILS becomes inferior to the other three ILS algorithms. 
The solutions that the three ILS algorithms
find in 10 seconds are better 
than those that 3-ILS outputs after 30 seconds.

Then, to enhance the performance of the ILS algorithm,
the neighborhood size should not be too large. 
It is important to run a fast local search 
with a moderately small neighborhood 
many times from various initial solutions.
We should develop a better mechanism to
generate a good initial solution of the local search
rather than to investigate a larger neighborhood. 
This is left for future work. 

\section{Concluding Remarks}
\label{sec:conc}

We have designed efficient local search algorithms for the PLSE problem
such that the neighborhood is defined by swap operation. 
The proposed $(p,n^2)$-neighborhood search algorithm $(p\in\{1,2,3\})$
finds an improved solution in the neighborhood
or concludes that no such solution exists in $O(n^{p+1})$ time.
We also proposed a novel swap operation, Trellis-swap,
a generalization of $(1,n^2)$-swap and $(2,n^2)$-swap,
whose neighborhood search algorithm takes $O(n^{3.5})$ time. 

Our achievement is attributed to observation on the graph structure such that
each node is regarded as a 3D integral point, 
its neighbors are partitioned into $O(1)$ cliques
and no two neighbors in different cliques are adjacent. 
Our idea is never limited to the PLSE problem
but can be extended to MIS problems on graphs having the same structure,
including some instances of the {\em maximum strong independent set\/} problem
on hypergraphs~\cite{HL.2009}. 

Our ILS algorithm is no better than a prototype
and has much room for improvement. 
Nevertheless it outperforms IBM ILOG CPLEX and \LSSOL\ in most of the tested instances. 
Of course there are various solvers available,
and comparison with them is left for future work. 
Among these, we consider that SAT based solvers may not be so effective
due to our preliminary experiments as follows;
we tried to solve the satisfiability problem on QWH instances 
by \SUGAR\ (ver.~2.2.1)~\cite{SUGAR},
where we used \MINISAT\ (ver.~2.2.1)~\cite{MINISAT} as the core SAT solver. 
Note that any QWH instance is satisfiable.
\SUGAR\ decides the satisfiability (and thus finds an optimal solution)
for about $50\%$ of the instances
($n=40$ and $r\in\{0.3,0.4,...,0.8\}$) within 30 seconds,
while Tr-ILS finds an optimal solution for $78\%$ 
of the instances within the same time limit.  

Alternatively one can conceive another metaheuristic algorithm,
utilizing methodologies developed so far~\cite{GK.2003,G.2007}. 
Our local search can be a useful tool for this. 
For example, in GA,
one can enhance the quality of a population
by performing our efficient local search on each solution. 

Although the local search achieves
the best approximation factor for the PLSE problem currently,
no one has explored its efficient implementation in the literature.
This work resolves this issue to some degree.  

\newpage
\bibliographystyle{splncs03}
\bibliography{mybib}

\end{document}